\documentclass[sn-mathphys,Numbered]{sn-jnl}


\usepackage{graphicx}%
\usepackage{multirow}%
\usepackage{amsmath,amssymb,amsfonts}%
\usepackage{amsthm}%
\usepackage{mathrsfs}%
\usepackage[title]{appendix}%
\usepackage{xcolor}%
\usepackage{textcomp}%
\usepackage{manyfoot}%
\usepackage{booktabs}%
\usepackage{algpseudocode}%
\usepackage{listings}%
\usepackage{caption} 
\usepackage{subcaption}
\usepackage{amsmath,amsfonts}
\usepackage{amsthm,amssymb}




\newcommand{\E}{{\mathbb E}}
 
\theoremstyle{thmstyleone}%
\newtheorem{theorem}{Theorem}
\newtheorem{proposition}[theorem]{Proposition}%

\theoremstyle{thmstyletwo}%
\newtheorem{remark}{Remark}%

\theoremstyle{thmstylethree}%
\newtheorem{definition}{Definition}%

\raggedbottom

\begin{document}

\title[ ]{Distributional Reinforcement Learning on Path-dependent Options}


\author*[1 ]{\fnm{Ahmet Umur} \sur{\"Ozsoy}}\email{umurozsoy@gmail.com}


\affil*[1]{\orgdiv{Department of Industrial Engineering}, \orgname{Gebze Technical University}, \orgaddress{  \city{Kocaeli}, \postcode{41400},  \country{Turkey}}}



\abstract{ We reinterpret and propose a   framework for pricing path-dependent financial derivatives by estimating the full distribution of payoffs using Distributional Reinforcement Learning (DistRL). Unlike traditional methods that focus on expected option value, our approach models the entire conditional distribution of payoffs, allowing for risk-aware pricing, tail-risk estimation, and enhanced uncertainty quantification. We demonstrate the efficacy of this method on  Asian  options, using quantile-based value function approximators.}

\keywords{}	

\keywords{distributional reinforcement learning, path-dependent options, Bellman operator}



\maketitle

\section{Introduction}\label{sec1}

Distributional\footnote{This is a preliminary draft of an ongoing study. Comments and suggestions are welcome; a revised and expanded version will follow.} Reinforcement Learning (DistRL) stands a unique and an interesting extension of Reinforcement Learning (RL).
Traditionally, RL algorithms aims to learn the expected value of returns, rather than the distribution of returns \footnote{For the time being, we do not refer to financial returns, but the returns in RL modeling.}.
However, an interesting proposal on an extension of  reinforcement learning appeared in \cite{bellemare2017distributional} along with the concept of distributional Bellman equation, an adjusted\footnote{With actions omitted.} version for our setting is given in Equation \eqref{distBellman_intro}.
Although distributional perspectives on RL have a long-standing history, \cite{bellemare2017distributional} stands the first disciplined introduction of distributional reinforcement learning.
For a historical account, we refer to the references in \cite{bellemare2017distributional}.
The study offers a novel method to model the value distribution\footnote{The distribution of interest is coined \emph{value distribution} by \cite{bellemare2017distributional}, and we follow the terminology in this study.} via categorical approximation, and proves the distributional Bellman operator is a contraction in the Wasserstein metric.
A study that closely came after, \cite{dabney2017distributional}, offers a quantile function approximation (Quantile Regression - Deep Q-Network, QR-DQN) rather than a categorical given in \cite{bellemare2017distributional}.
The suggestion naturally minimizes the Wasserstein distance, along with more stable learning.
This especially becomes valuable in financial context, given the nature of the underlying data.
However, we remark that neither of these studies utilizes financial data.

Although we are closely aligned with \cite{dabney2017distributional}, we differ in our architectural choice. 
In contrast to prior works such as \cite{dabney2017distributional}, which used deep neural networks to approximate quantile functions, we introduce a Radial Basis Function (RBF) feature expansion for quantile learning.
This modeling choice provides an interpretable and a computationally efficient architecture, and is particularly well-suited to financial applications where domain knowledge can guide the design of the feature space.
Perhaps most importantly, it is lighter and more stable than training a large deep network and perfectly aligned with option pricing where states have interpretable meaning. 
To our knowledge, this is the first application of RBF-based quantile approximation in Distributional RL for option pricing.
Speaking of option pricing applications via  RL are numerous, for instance \cite{marzban2023deep}, \cite{du2020deep}, \cite{Cao_2020} and \cite{ozsoy2025qlbs} among others.
However, at the time of writing, we are only aware of the study \cite{cao2023gamma} on (deep) distributional perspective of RL.
Through quantile regression,  \cite{cao2023gamma} tackles Gamma and Vega hedging, a framework that is completely different than ours and making our study stand unique in the literature.

Aligning the principled envision of DistRL brings us to our inherent purpose.
Path-dependent options, a special class of exotic options, derive their payoff from the trajectory of the underlying asset rather than its terminal value.
Although this distinct feature offers an interesting alternative in the financial markets, such options are packed with further complexities.
Therefore we remark that these options pose a significant challenge in financial engineering due to the high dimensionality and complexity of their payoff structures.
Traditional numerical approaches such as Monte Carlo simulations or PDE-based methods ,  aim to estimate the expected (discounted) payoff under the risk-neutral measure, $\mathbb{Q}$, for instance \cite{alziary1997pde}, \cite{rogers1995value} and \cite{vecer2001new}.
Although, in terms of pricing this could suffice; estimating such expected values while ignoring the distributional perspectives embedded neglects the valuable information.
This issue could further escalate in cases where tail-risk or tail-related properties are prioritized.

To address this, we leverage tools from Distributional Reinforcement Learning (DistRL) to estimate the entire (conditional) distribution of the payoff of a path-dependent option. 
With this, we frame the problem of option pricing as a distributional value estimation problem in a Markovian state space.
We formulate  the evolution of the terminal payoff distribution as a function of the current state and path-dependent features and treat it as a learning problem.
In Distributional RL, we consider the return distribution, with some notations explained in detail in the next section,
%
\begin{equation}
Z^{\pi}(s)=\left(\sum_{t=0}^{\infty} \gamma^{t} R_{t} \mid s \right)
\end{equation}
And define the distributional Bellman operator:
\begin{equation}\label{distBellman_intro}
\mathcal{T}^{\pi} Z(s) \stackrel{D}{=} R(s)+\gamma Z(s^{\prime})
\end{equation}
This operator is a contraction in the Wasserstein metric $W_{p}$ (Bellemare et al., 2017), and forms the backbone for learning quantiles or categorical approximations of $Z$.
Although, we develop the approach in the next section; so as not to cause confusion we remark that in our set-up we omit the policy associated, $\pi$, as there is no agent choosing actions.
That is, In our framework, the system evolves under a stochastic process (e.g., geometric Brownian motion), and no control policy is applied.
The value distribution $Z(s)$ reflects the uncertainty induced by stochastic dynamics and path-dependent payoffs. 
Relevance to option pricing, then, reduces to estimating $$Z(s_{0})=\mathbb{P}(f(S_{0:T}) \leq z \mid S_{0}=s_{0})$$ for various levels of $z$. 
Namely, to learn the cumulative distribution of future payoff conditioned on current state.
This goes beyond just estimating $\mathbb{E}[f(S_{0:T})]$   as it estimates (the entire) distributional functional.


From another perspective, our manuscript provides and elaborates a rigorous connection between distributional reinforcement learning (DistRL) and conditional distributions of path functionals.
It does it so in terms of financial context for the first time to the best of our knowledge.
We, in our humble opinions, believe it as the missing link between model-free approaches and measure-theoretic models in finance.
Given the black-box type of modeling ongoing lately, we, while building the model, provide much further reasoning   aligned with the requirements of financial modeling; and in doing so we aim to lay ground for later studies.   
Our study elaborates centers on learning the full conditional distribution of a payoff functional $( f(S_{0:T}) )$, given a filtration-reduced, finite-dimensional state $( s_t \in \mathbb{R}^d )$ that embeds relevant path information (e.g., running average, maximum, or time index).

Equation \eqref{distBellman_intro}, in a simple but effective way, maps the distributional Bellman operator into a recursive form that operates over probability measures induced by stochastic processes in our analogy.
Simply, \eqref{distBellman_intro} mimics the law of iterated expectations; yet on distributional level. 
Further, the recursion could also be restated as a stochastic flow of (conditional) measures; in continuous-time this becomes the Fokker-Planck equation for the evolving density of payoffs obtained at maturity.
Therefore, on grounds of stochastic analysis; the recursion is sound and highly applicable.
In financial terms, for instance; this allows path-dependent option prices to be modeled as evolving laws of random variables, offering direct access to quantiles, tail risk, and risk-adjusted measures such as CVaR, which are essential in modern risk management and regulatory frameworks.
Mathematically, the value distribution process $ \{Z_t\}_{t \in [0,T]} $ can be interpreted as a measure-valued stochastic process $Z_t := \mathcal{L}^\mathbb{Q}(f(S_{0:T}) \mid s_t)$ which evolves randomly due to the underlying dynamics of $ S_t $. 
However, our approach tracks this process via a family of quantile functions$ \theta_i(s_t)$, producing a tractable low-rank approximation to a family of conditional probability laws on $ \mathbb{R} $.

Therefore, we \emph{reinterpret} the recursive pricing of path-dependent options, specifically Asian options, as a distributional fixed-point problem on the space of probability measures. 
With this, on narrower level, we aim to present and reframe a pricing approach for a challenging option along with laying the fundamental theory for future finance-related studies.
On higher level we aim to achieve quantifying full distributional uncertainty rather than point estimates in financial context, encoding domain knowledge through feature-engineered state summaries $ s_t $  and adapting distributional estimates  without requiring parametric density modeling.
This contribution via reframing of the stochastic  problem as recursive quantile learning model constitutes a novel intersection between stochastic process, computational finance, and learning-based control.

The ability to learn and represent conditional payoff distributions $ Z_t := \mathcal{L}(f(S_{0:T}) \mid s_t) $ carries several practical benefits across derivative pricing, hedging, and model calibration.
Once trained, the model enables fast inference of distributional payoffs at any state $ s_t $. 
This actually permits real-time pricing of path-dependent derivatives during market events, which is typically infeasible with nested simulation or full revaluation methods.
The finite-dimensional summary $ s_t \in \mathbb{R}^d $ can be augmented to capture domain-specific path features  allowing our framework to accommodate exotic or path-sensitive options with minimal architectural changes.

Another possibility is gradient-based calibration as the quantile functions $ \theta_i(s)$ are smooth (via RBF parametrization in our setting), one can use gradient-based optimization to calibrate model parameters (e.g., implied vol surface or jump parameters) by matching market-implied quantiles or distributional moments, without relying on expectation-based loss.
In financial markets with significant discontinuities (i.e. jupms), the stochastic updates induced by the quantile regression loss  is likely to remain well-behaved. 
Extreme events appear as fat-tailed shocks in the state process  shifting the learned quantile estimates without requiring a re-specification of the parametric state transition. 
Hypothetically, the DistRL approach provides robust adaptability to jumps, leveraging its distributional representation.
However, these remarks of ours on both jumps and advantages of calibration require further research, and we leave such matters aside as a further study.

\subsection{Overview of the Learning Framework}
We now summarize the core components of our Distributional RL framework for pricing path-dependent derivatives:
\begin{itemize}
	\item \textbf{State Representation:} The process is made Markovian via augmentation (e.g., $s_t = (S_t, A_t, t)$).
	\item \textbf{Value Target:} The goal is to learn $Z_t(s_t) = \mathcal{L}(f(S_{0:T}) \mid s_t)$.
	\item \textbf{Distributional Approximation:} $Z(s) \approx \sum_i \delta_{\theta_i(s)}$, with $\theta_i(s) \approx F_Z^{-1}(\tau_i)$.
	\item \textbf{Training Objective:} Minimize quantile regression loss $\mathcal{L}_{\mathrm{QR}}$ using SGD.
	\item \textbf{Function Approximation:} $\theta_i(s) = w_i^\top \phi(s)$, with RBF features $\phi(s)$.
\end{itemize}
This structure allows us to rigorously approximate the full return distribution for  payoffs using fixed quantile targets and semi-gradient updates.
The full conditional distribution over payoffs allows for robust risk assessment and tail-aware pricing strategies. In what follows, we develop a quantile-based recursive algorithm for approximating, tailored to pricing Asian options, where the Markovian state includes both current spot price and running average.

\section{Formulating the Problem}

\subsection{Preliminary Perspectives}

Let $(\Omega, \mathcal{F}, \{\mathcal{F}_t\}_{t\in[0,T]}, \mathbb{Q})$ be a filtered probability space satisfying the usual conditions, where $\mathbb{Q}$ is the risk-neutral measure.
We consider a financial asset whose (risk-neutral) price process $\{S_t\}_{t\in[0,T]}$ follows the stochastic differential equation, \cite{shreve2004stochastic}, below such that
\begin{equation}\label{eq:sde}
dS_t = rS_t dt + \sigma S_t dW_t^\mathbb{Q}, \quad S_0 > 0,
\end{equation}
with constant interest rate $r \in \mathbb{R}_+$ and volatility $\sigma > 0$. 
The solution is the well-known geometric Brownian motion:
\begin{equation}
S_t = S_0 \exp \left( \left( r - \frac{1}{2} \sigma^2 \right) t + \sigma W_t^\mathbb{Q} \right).
\end{equation}
Given we are interested in learning the distribution of the payoffs; we now define the payoff functional of path-dependent option.
Let $f : C([0,T],\mathbb{R}_+) \to \mathbb{R}_+$ be a measurable functional mapping a sample path $\omega \mapsto \{S_t(\omega)\}_{t\in[0,T]}$ to a terminal payoff.
Among an interesting class of such options, we choose to work on (arithmetic) Asian call options; hence
\begin{equation}\label{eq:payoff}
f(S_{0:T}) = \max \left( \frac{1}{T} \int_0^T S_t dt - K, 0 \right),
\end{equation}
with $K$ denoting the strike price, and namely the random variable of interest becomes $Y := e^{-rT}f(S_{0:T})$.
Our objective is then simply to approximate the distribution $
Z_0 := \mathcal{L}^{\mathbb{Q}}(Y \mid \mathcal{F}_0),$ rather than merely its expectation.

 ~

\subsection{Path-Augmented Representation}

In, for instance, option pricing via Monte Carlo simulation, we are interested in the computing the expected payoff, i.e. $\E [f(S_{0:T})]$.
In our model, however, we aim to learn the entire conditional distribution of the payoffs given the \emph{filtration} $\mathcal{F}_t$ at $t$.
The filtration $\mathcal{F}_t:=\sigma(S_u:u\leq t)$ defines the all path history $\{S_u\}_{u\in[0,t]}$, which takes values in
the infinite-dimensional Banach space $C([0,t],\mathbb{R})$ endowed with the supremum norm.
Hence, conditioning on $\mathcal{F}_t$ is equivalent to conditioning on an infinite-dimensional object, possibly posing computational challenges.
To address and most importantly overcome this, we define a finite-dimensional adapted process $s_t\in\mathbb{R}^d$ that summarizes all the information in regard to the payoffs, thus enabling a meaningful and practical recursive modeling.
To enable recursive estimation of the value distribution, we formulate a Markov state embedding of the path:
%
%



\begin{proposition}[Finite-Dimensional Sufficient State for Path-Dependent Payoffs]\label{prop:s_t}
	Let $f : C([0,T],\mathbb R_+) \to \mathbb R$ be a measurable payoff functional of the price path $S_{0:T}$.
	For $t\in[0,T]$, suppose there exists an $\mathcal F_t$--measurable, finite-dimensional summary
	$s_t = \psi_t(S_{0:t}) \in \mathbb R^d$ such that for every bounded Borel $h:\mathbb R\to\mathbb R$,
	\[
	\mathbb E^\mathbb{Q}\!\bigl[h(f(S_{0:T})) \mid \mathcal F_t \bigr]
	= g_{t,h}(s_t)
	\]
	for some measurable $g_{t,h}:\mathbb R^d\to\mathbb R$. Then the conditional law of the payoff given $\mathcal F_t$ depends on the past only through $s_t$; i.e.
	\[
	\mathcal L^\mathbb{Q}\!\bigl(f(S_{0:T}) \mid \mathcal F_t \bigr)
	= \mathcal L^\mathbb{Q}\!\bigl(f(S_{0:T}) \mid s_t \bigr).
	\]
\end{proposition}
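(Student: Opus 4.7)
The plan is to show that the hypothesis ``conditional expectations against every bounded Borel test function factor through $s_t$'' already pins down the full conditional law, using a standard monotone class / regular conditional distribution argument. Since $s_t = \psi_t(S_{0:t})$ is $\mathcal F_t$-measurable, we have $\sigma(s_t)\subseteq\mathcal F_t$, so the tower property is available and will do most of the work.

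First I would fix a bounded Borel $h:\mathbb R\to\mathbb R$ and apply the tower property to the conditional expectation with respect to the coarser $\sigma$-algebra $\sigma(s_t)$:
\[
\mathbb E^\mathbb{Q}\!\bigl[h(f(S_{0:T})) \mid \sigma(s_t)\bigr]
= \mathbb E^\mathbb{Q}\!\Bigl[\mathbb E^\mathbb{Q}\!\bigl[h(f(S_{0:T})) \mid \mathcal F_t\bigr] \,\Big|\, \sigma(s_t)\Bigr]
= \mathbb E^\mathbb{Q}\!\bigl[g_{t,h}(s_t) \mid \sigma(s_t)\bigr]
= g_{t,h}(s_t),
\]
where the last equality uses that $g_{t,h}(s_t)$ is $\sigma(s_t)$-measurable. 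Combined with the hypothesis this yields $\mathbb E^\mathbb{Q}[h(f(S_{0:T})) \mid \mathcal F_t] = \mathbb E^\mathbb{Q}[h(f(S_{0:T})) \mid \sigma(s_t)]$ almost surely, for every bounded Borel $h$.

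Next I would promote this equality of scalar conditional expectations to an equality of conditional laws. Specialising $h=\mathbf 1_B$ for each Borel set $B\subseteq\mathbb R$ gives $\mathbb Q(f(S_{0:T})\in B \mid \mathcal F_t) = \mathbb Q(f(S_{0:T})\in B \mid s_t)$ almost surely. Because the Borel $\sigma$-algebra on $\mathbb R$ is countably generated (e.g.\ by rational intervals), the almost-sure equality can be arranged simultaneously for a generating $\pi$-system, and a monotone class / Dynkin argument extends it to all Borel sets on a common full-measure event. Since $\mathbb R$ is Polish, regular conditional distributions for $f(S_{0:T})$ given $\mathcal F_t$ and given $\sigma(s_t)$ exist, and the preceding coincidence of all conditional probabilities forces these two regular versions to agree $\mathbb Q$-a.s., which is precisely
\[
\mathcal L^\mathbb{Q}\!\bigl(f(S_{0:T}) \mid \mathcal F_t\bigr) = \mathcal L^\mathbb{Q}\!\bigl(f(S_{0:T}) \mid s_t\bigr).
\]

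The main obstacle I anticipate is purely technical: handling the ``uncountable family of null sets'' that arises when one writes the equality for each bounded Borel $h$ separately. This is resolved by restricting to a countable determining class (indicators of rational intervals, or bounded continuous functions with a countable dense subset in the sup norm on compacts) and then invoking a monotone class argument to extend the identity back to all bounded Borel $h$. Existence of a regular conditional distribution on the target Polish space $\mathbb R$ is standard and requires no additional assumption beyond what the proposition already provides.
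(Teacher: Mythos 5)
Your proof is correct and takes essentially the same route as the paper: both hinge on the observation that $\sigma(s_t)\subseteq\mathcal F_t$ together with the hypothesis applied to indicators $\mathbf 1_{\{x\le y\}}$ with rational $y$ forces the conditional CDF given $\mathcal F_t$ to be $\sigma(s_t)$-measurable, hence to coincide with the conditional CDF given $s_t$. You simply spell out the tower-property step, the countable determining class, the Dynkin-class extension, and the appeal to regular conditional distributions on the Polish target, all of which the paper's terse proof leaves implicit.
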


\begin{proof}
	Because $s_t$ is $\mathcal F_t$--measurable we have $\sigma(s_t)\subseteq\mathcal F_t$.
	By assumption, for indicator functions $h_y(x)=\mathbf 1_{\{x\le y\}}$ (rational $y$),
	\[
	\mathbb{Q}\!\bigl(f(S_{0:T})\le y \mid \mathcal F_t\bigr)
	= g_{t,h_y}(s_t),
	\]
	so the conditional CDF given $\mathcal F_t$ is a measurable function of $s_t$ alone.
	Hence the conditional distribution given $\mathcal F_t$ equals that given $s_t$.
\end{proof}


%
%
Proposition \ref{prop:s_t} justifies the recursive Bellman update in distributional Reinforcement Learning, soon we discuss in detail.
To our knowledge, this formalization of state-based measurability for path-dependent payoffs, using a Doob-Dynkin lemma, has not appeared in the distributional RL literature and is adapted here to establish the theoretical validity of our reduced representation $s_t$ for recursive learning.
Proposition \ref{prop:s_t} is especially important where estimation or updating recursively relies on conditioning upon the state rather than full path, or the path alone.
However, there is a practical result of Proposition \ref{prop:s_t} that is central in a financial setting as ours.
For Asian options, the payoff defined \eqref{eq:payoff} could be also stated as $f(\cdot)= \max (A_T -K, 0)$.
The idea behind our rationale is quite straightforward, as the payoff depends on $A_T$ that evolves from $S_t$; we put forward that the state could be represented as 
\begin{equation}
	s_t = (S_t, A_t, t),
\end{equation} 
for which we implicitly condition upon in Proposition \ref{prop:s_t}. 
Furthermore, we from this point on refer to the conditional distribution of the payoffs as a function of $s_t$.
%
%
We now  define the value distribution process $\{Z_t\}_{t\in[0,T]}$ such that
$Z_t := \mathcal{L}^\mathbb{Q} \left( e^{-r(T-t)} f(S_{[t,T]}) \mid s_t \right)$.
That is, $Z_t$ denoting the conditional distribution of the (discounted) payoff given the finite-dimensional Markovian state $s_t$.
This modeling perspective actually highlights the characteristics of DistRL and is a step from classical value based modeling to distributional modeling.
The process $\{Z_t\}_{t \in [0,T]}$ evolves randomly over time as the governing state changes, thus (hypothetically) capturing the  dynamics of the market.
This  representation is  useful in financial applications where understanding the variability, skewness, or tail behavior of potential outcomes is as important as estimating their mean.
Formally, each $Z_t$ is a $\mathcal{P}(\mathbb{R})$-valued random variable adapted to the filtration generated by $\{s_u\}_{u\leq t}$. 
The mapping $s_t\mapsto Z_t$ defines a distribution-valued process if the transition dynamics of $s_t$ and the terminal functional $f$ are compatible with the Markov property.
This perspective enables the use of operator-based methods such as distributional Bellman operators to propagate distributions over time.

~


\begin{definition} [Wasserstein Metric, from \cite{dabney2017distributional}]
	Let $W_p(\mu, \nu)$ denote the $p$-Wasserstein distance between probability measures $\mu, \nu \in \mathcal{P}_p(\mathbb{R})$, defined as:
	$$W_p^p(\mu, \nu) := \inf_{\gamma \in \Pi(\mu, \nu)} \int_{\mathbb{R} \times \mathbb{R}} |x - y|^p d\gamma(x, y),$$
	where $\Pi(\mu, \nu)$ is the set of all couplings of $\mu$ and $\nu$.
\end{definition}
While we do not explicitly model upon this distance, we note that the distributional Bellman operator is known to be a contraction under the $W_1$ metric in classical settings \cite{bellemare2017distributional}.
This further suggests that the recursive updates of the value distributions are stable under weak convergence criteria as induced by the Wasserstein metric.
In this setting, even approximate representations of $Z $ converge under mild assumptions, providing a principled foundation for reliable distributional learning in quantitative finance, \cite{rowland2018analysis}, \cite{bellemare2017distributional}.

We conclude that minimizing the Wasserstein distance alone may not suffice to evaluate the practical adequacy of the learned value distributions for option pricing, as it does not directly reflect financially critical aspects such as tail behavior or extreme quantile accuracy.
Given that no thorough study of this limitation exists to our knowledge, we retain this perspective in our modeling approach and leave a deeper investigation to future research; especially its relevance for path-dependent option pricing remains an open question.
\quad
\begin{definition}[Distributional Bellman Operator, adapted from \cite{dabney2017distributional}]\label{distBO}
	Let $\mathcal{T} $ be the distributional Bellman operator acting on a distribution-valued function $ Z : \mathcal{S} \to \mathcal{P}_p(\mathbb{R}) $, where $ \mathcal{P}_p(\mathbb{R}) $ is the set of probability measures on $ \mathbb{R} $ with finite $ p $-th moment. For a given state $ s_t \in \mathcal{S} $, we define
	\[
	(\mathcal{T}Z)(s_t) := \mathcal{L}^\mathbb{Q} \left[ R_t + \gamma Y_{t+1} \mid s_t \right], \quad \text{where } Y_{t+1} \sim Z(s_{t+1}),\ s_{t+1} \sim P(\cdot \mid s_t),
	\]
here $ R_t $ as the immediate reward, and $ \gamma = e^{-r\Delta t} $ as the discount factor per time step. The randomness is over the transition $ s_{t+1} $ and the sampled future payoff from $ Z(s_{t+1}) $.
\end{definition}
\quad

The distributional Bellman operator was introduced by \cite{dabney2017distributional}; we here specialize this operator for the Markovian state-only setting $Z(s_t)$ rather than the state action $Z(s,a)$ and we incorporate the fact that we operate under risk-neutral probability measure with a financial setting.
While the current formulation is applied to a single underlying asset, the framework readily generalizes to multi-asset settings.
For example, pricing a basket option under joint dynamics.

\begin{remark}
Unlike classical reinforcement learning settings, our formulation does not involve an agent making sequential decisions through actions. 
That is, the state space $\mathcal{S}$ evolves purely randomly, and no policy $\pi$ is present to modulate transitions or rewards.
As such, the (distributional) Bellman operator $\mathcal{T}$ we here adopt corresponds to the uncontrolled case, where the dynamics are fully governed by the exogenous transition kernel $P(s_{t+1} \mid s_t)$ and fixed reward structure $R_t$.
This simplification is, we remark, reasonable in   option pricing, where the objective is not to choose actions, but to determine the distribution of future payoffs under a specified stochastic model.
\end{remark}

While the contraction property of the distributional Bellman operator is well-established in the reinforcement learning literature, its financial implications remain underexplored.
With that in mind in our context, we aim to provide rigorous foundation for recursively estimating the full distribution of discounted payoffs which is a key requirement in pricing exotic options where mean-based methods might  obscure important tail behavior. 
The ability to  propagate distributional uncertainty over time could be particularly valuable in financial applications involving risk measures such as Value-at-Risk (VaR), Conditional VaR, or path-dependent regulatory capital requirements.
%
%
%
%
%
%
All the complexity aside, our goal is to approximate the  value distribution $ Z $ by learning a distribution-valued function:
\[
\Phi : \mathcal{S} \to \mathcal{P}_p(\mathbb{R}),
\]
where $ \mathcal{S} \subset \mathbb{R}^d $ denotes a finite-dimensional, Markovian state space constructed from path-dependent statistics (e.g., current asset level $ S_t $, running maximum $ \sup S_{[0,t]} $, or accumulated average), and $ \mathcal{P}_p(\mathbb{R})$ is the space of Borel probability measures with finite $ p $-th moment. 
This mapping $ \Phi $ interestingly serves as a parametric approximation to the value distribution process $ Z $, and is learned recursively via the application of a distributional Bellman operator. 
We now formalize the functional setting and establish the key contraction property that ensures convergence to a unique fixed-point distribution.

\begin{proposition}[Distributional Bellman Contraction, adapted to measurable-state notation from \cite{bellemare2017distributional}]
	Let $(\mathcal S,P,r,\gamma)$ be a Markov decision process with measurable state space
	$\mathcal S$, transition kernel $P(\cdot\mid s)$, discount $0<\gamma<1$,  
	and reward function $r:\mathcal S\times\mathcal S\to\mathbb R$ such that  
	$\sup_{s\in\mathcal S}\mathbb E_{s'\sim P(\cdot\mid s)}\!\bigl[|r(s,s')|^p\bigr]<\infty$  
	for some $p\!\in\![1,\infty)$.  
	Define
	\[
	\mathcal Z := \bigl\{Z:\mathcal S\to\mathcal P_p(\mathbb R)\bigr\},\qquad
	W_p(Z_1,Z_2)\;:=\;\sup_{s\in\mathcal S} W_p\!\bigl(Z_1(s),Z_2(s)\bigr),
	\]
	where $W_p$ is the order-$p$ Wasserstein distance on $\mathcal P_p(\mathbb R)$.  
	For $Z\in\mathcal Z$, define
	\[
	(\mathcal T Z)(s) \stackrel{d}{=} r(s,S')+\gamma\,Y,
	\qquad S'\sim P(\cdot\mid s),\; Y\sim Z(S') \text{ (cond.\ on $S'$)}.
	\]
	Then $\mathcal T$ is a $\gamma$-contraction on $(\mathcal Z,W_p)$:
	\[
	\forall Z_1,Z_2\in\mathcal Z:\qquad
	W_p\!\bigl(\mathcal T Z_1,\mathcal T Z_2\bigr)
	\;\le\;\gamma\,W_p(Z_1,Z_2).
	\]
	Consequently $\mathcal T$ has a unique fixed point $Z^\star\in\mathcal Z$.
\end{proposition}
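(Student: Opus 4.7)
The plan is to mimic the classical argument of Bellemare, Dabney, and Munos and adapt the bookkeeping to our state-only, measurable formulation, then invoke Banach's fixed point theorem. First I would fix $s\in\mathcal{S}$ and build, on an auxiliary probability space, a common next-state $S'\sim P(\cdot\mid s)$ together with a conditional optimal coupling $(Y_1,Y_2)$ of $Z_1(S')$ and $Z_2(S')$, i.e.\ a Borel selection $s'\mapsto \gamma^{\mathrm{opt}}_{s'}\in\Pi(Z_1(s'),Z_2(s'))$ such that the conditional $p$-cost equals $W_p^p(Z_1(s'),Z_2(s'))$. Measurability of this selection follows from a measurable selection theorem applied to the Kantorovich functional, which is the one technical subtlety I would flag and then take as standard.

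Second, from this coupling the random variables $U_i := r(s,S')+\gamma Y_i$ have laws $(\mathcal{T}Z_i)(s)$, so their joint law is an admissible coupling of $(\mathcal{T}Z_1)(s)$ and $(\mathcal{T}Z_2)(s)$. Hence
\begin{align*}
W_p^p\!\bigl((\mathcal T Z_1)(s),(\mathcal T Z_2)(s)\bigr)
&\le \mathbb E\!\left[|U_1-U_2|^p\right]
= \gamma^p\,\mathbb E\!\left[|Y_1-Y_2|^p\right] \\
&= \gamma^p\,\mathbb E_{S'\sim P(\cdot\mid s)}\!\left[W_p^p\!\bigl(Z_1(S'),Z_2(S')\bigr)\right]
\le \gamma^p\,\sup_{s'\in\mathcal S} W_p^p\!\bigl(Z_1(s'),Z_2(s')\bigr).
\end{align*}
The reward term $r(s,S')$ cancels because $W_p$ is translation invariant conditional on $S'$, and the factor $\gamma^p$ comes from positive homogeneity. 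Taking the supremum over $s$ on the left and the $p$-th root yields $W_p(\mathcal T Z_1,\mathcal T Z_2)\le\gamma\,W_p(Z_1,Z_2)$, which is the claimed contraction.

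Third, for the fixed-point conclusion I would verify that $(\mathcal Z, W_p)$ is a complete metric space: at each state $\mathcal P_p(\mathbb R)$ equipped with $W_p$ is Polish and complete, and uniform $W_p$-Cauchy sequences converge pointwise to an element of $\mathcal Z$ (the $p$-integrability hypothesis on $r$, together with the contraction estimate iterated on any candidate initial $Z_0\in\mathcal Z$ with $\sup_s \int |y|^p\,Z_0(s)(dy)<\infty$, keeps the iterates inside $\mathcal Z$). Banach's fixed point theorem then delivers a unique $Z^\star\in\mathcal Z$ with $\mathcal T Z^\star=Z^\star$.

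The main obstacle is the measurable-coupling step: asserting that a jointly measurable map $s'\mapsto\gamma^{\mathrm{opt}}_{s'}$ exists so that the mixture-style inequality $W_p^p(\mathcal L(r+\gamma Y_1),\mathcal L(r+\gamma Y_2))\le \mathbb E[W_p^p(Z_1(S'),Z_2(S'))]$ is rigorous rather than heuristic. I would handle this by invoking the Kuratowski--Ryll-Nardzewski selection theorem on the multifunction $s'\rightrightarrows\Pi(Z_1(s'),Z_2(s'))$, or, equivalently, by noting that since $Z_1,Z_2$ are assumed measurable in $s$, the optimal transport cost $s'\mapsto W_p^p(Z_1(s'),Z_2(s'))$ is Borel and a measurable coupling exists by standard disintegration. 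Everything else (translation invariance, homogeneity, completeness, Banach) is routine once this selection step is in place.
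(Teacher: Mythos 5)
Your proof follows the same optimal-coupling argument as the paper's: condition on a common next state $S'\sim P(\cdot\mid s)$, take a conditional optimal $W_p$-coupling $(Y_1,Y_2)$ of $Z_1(S')$ and $Z_2(S')$, cancel the shared reward term by translation invariance, pull out $\gamma^p$ by positive homogeneity, bound the resulting expectation by a supremum, and close with completeness plus Banach's fixed-point theorem. The only substantive difference is that you explicitly flag the measurable-selection issue in constructing the conditional optimal coupling (proposing Kuratowski--Ryll-Nardzewski or a disintegration argument), whereas the paper leaves this step implicit, so your treatment is if anything the more careful one.
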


\begin{proof}
	Fix $s\in\mathcal S$ and couple both copies of the operator on the same transition:
	\[
	S'\sim P(\cdot\mid s),\qquad
	(Y_1,Y_2) \text{ an optimal $W_p$-coupling of } Z_1(S'),Z_2(S').
	\]
	Let $R := r(s,S')$ and set $X_i := R + \gamma Y_i$, $i=1,2$, so $X_i \sim (\mathcal T Z_i)(s)$.
	Because $R$ is common to both $X_1,X_2$,
	\[
	|X_1-X_2|^p = \gamma^p |Y_1-Y_2|^p.
	\]
	Taking expectations and using optimality of the conditional coupling,
	\[
	W_p^p\!\bigl((\mathcal T Z_1)(s),(\mathcal T Z_2)(s)\bigr)
	\;\le\;
	\gamma^p\,\mathbb E_{S'\sim P(\cdot\mid s)}
	\Bigl[ W_p^p\!\bigl(Z_1(S'),Z_2(S')\bigr) \Bigr].
	\]
	Hence
	\[
	W_p\!\bigl((\mathcal T Z_1)(s),(\mathcal T Z_2)(s)\bigr)
	\le
	\gamma\,\Bigl(
	\mathbb E_{S'\sim P(\cdot\mid s)}
	\bigl[ W_p^p(Z_1(S'),Z_2(S')) \bigr]
	\Bigr)^{1/p}
	\le \gamma\,\sup_{s'\in\mathcal S}W_p\!\bigl(Z_1(s'),Z_2(s')\bigr).
	\]
	Taking the supremum over $s$ gives the stated contraction.
	Finally, the reward $p$-moment bound together with $Z\in\mathcal Z$ implies
	$(\mathcal T Z)(s)\in\mathcal P_p(\mathbb R)$ for all $s$, so $\mathcal T:\mathcal Z\to\mathcal Z$.
	Since $\mathcal P_p(\mathbb R)$ is complete under $W_p$ and the sup metric over a family of
	complete spaces is complete, Banach's fixed-point theorem yields a unique $Z^\star$.
\end{proof}

\cite{bellemare2017distributional} establishes the same $\gamma$--contraction result for the
distributional Bellman (policy-evaluation) operator via an optimal coupling argument.
Our presentation restates the theorem in a measurable-state setting and makes the coupling step implicit.
Because the same sampled reward $r(s,S')$ appears in both arms, the \emph{shift invariance} of
$W_p$ cancels the reward term, and \emph{positive homogeneity} yields the $\gamma$ factor.
Taking expectation over the transition and bounding by a supremum gives the contraction.
This form is convenient for continuous-state financial models derived from SDEs; 
path-dependent payoffs are handled by augmenting the state with the relevant path
functional so that the process remains Markov.

In practical terms, the $\gamma$-contraction of the distributional Bellman operator in the Wasserstein metric ensures that errors in the learned distribution at one time step do not amplify through recursion. This provides robustness to estimation noise, discretization error, or model misspecification; all of which are prevalent in financial settings.

In contrast, classical pricing methods that rely on recursive expectations (e.g., dynamic programming with expected value functions) may suffer from moment drift or tail distortion when propagated backward in time. The contraction in the full return distribution guarantees that not only the mean, but the entire shape of the distribution (including tails and higher moments), evolves in a stable and controlled manner, \emph{in theory}.
The recursive contraction of the value distribution in Wasserstein space could be thought  as a form of regularization across time.



\subsection{Reduction to Path-Dependent Option Pricing}

We now instantiate the abstract framework for the specific case of pricing path-dependent options under the risk-neutral measure. 
Consider an Asian option, where the payoff depends on the running average of the asset price.
We define a discretized Markovian state vector:
\begin{equation}
s_t = (S_t, A_t, t) \in \mathbb{R}^3
\end{equation}
where  $S_t$  is the spot price, $ A_t = \frac{1}{t} \int_0^t S_u du $ is the running average, and $ t $ is the current time.
We define the reward structure as:
\begin{equation}
R_t = 0 \quad \text{for } t < T, \quad \text{and } R_T = f(S_{0:T}),
\end{equation}
so that the entire payoff is realized at terminal time $T$.
Then, the terminal value distribution is deterministic:
\[
Z_T(s_T) = \delta_{f(S_{0:T})}
\]
and the backward recursion becomes:
\[
Z_t(s_t) = \mathcal{L} \left[ \gamma Z_{t+1}(s_{t+1}) \mid s_t \right]
\]

This formulation reduces the problem of pricing a path-dependent option to a  recursive distributional estimation problem.
At each time step, we propagate a full distribution of future payoffs conditioned on the Markovian path-augmented state $ s_t $. 
We approximate the value distribution $ Z_t(s_t)$ by a finite collection of quantiles $ \{\theta_i(s_t)\}_{i=1}^N $, trained using quantile regression.
Each quantile estimate $ \theta_i $ is updated via a stochastic gradient of the quantile loss function. To ensure the validity of these updates, we recall the standard result that the quantile regression loss admits an unbiased stochastic gradient estimator:

\begin{proposition}[Unbiased Stochastic Gradient for Quantile Loss, from \cite{dabney2017distributional}]\label{prop:quantile_unbiased}
	Let the quantile loss be defined as:
$$\rho_{\tau_i}(u) = u(\tau_i - \mathbb{I}_{\{u < 0\}}) \quad \text{where } u = r + \gamma \theta_i(s') - \theta_i(s)$$
Then the stochastic gradient:
$$
g_i(s, s') := \nabla_{\theta_i(s)} \rho_{\tau_i}(r + \gamma \theta_i(s') - \theta_i(s)) = -(\tau_i - \mathbb{I}_{\{r + \gamma \theta_i(s') - \theta_i(s) < 0\}})$$
is an unbiased estimator of the gradient of the expected quantile loss:
$$\mathbb{E}_{(s, s')} [g_i(s, s')] = \nabla_{\theta_i(s)} \mathbb{E}_{(s, s')} \left[ \rho_{\tau_i}(r + \gamma \theta_i(s') - \theta_i(s)) \right]$$
	
\end{proposition}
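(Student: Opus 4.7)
The plan is to proceed in three stages: differentiate the pointwise quantile loss in the scalar $\theta_i(s)$, invoke a Leibniz-type interchange to pull the derivative through the expectation, and verify that the non-differentiable point $u=0$ carries no mass so the two previous steps combine cleanly. Note that only $\theta_i(s)$ is being differentiated (the bootstrap target $\theta_i(s')$ is frozen), so this is the semi-gradient statement referenced in the overview.

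First I would treat $\rho_{\tau_i}$ as a piecewise linear convex function with slope $\tau_i$ on $(0,\infty)$ and slope $\tau_i-1$ on $(-\infty,0)$, so that $\rho_{\tau_i}'(u)=\tau_i-\mathbb{I}_{\{u<0\}}$ Lebesgue-almost everywhere, with bounded subdifferential $[\tau_i-1,\tau_i]$ at the kink. Treating $\theta_i(s)$ as a free scalar and applying the chain rule to $u(\theta_i(s))=r+\gamma\theta_i(s')-\theta_i(s)$ yields the stated pointwise gradient $g_i(s,s')=-(\tau_i-\mathbb{I}_{\{u<0\}})$, defined wherever $u\neq 0$. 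Because $|\rho_{\tau_i}'|\le 1$ wherever it exists, the difference quotient
\[
\frac{\rho_{\tau_i}(u-h)-\rho_{\tau_i}(u)}{h}
\]
is uniformly bounded by $1$ in absolute value, so dominated convergence lets me pass the $h\to 0$ limit under the expectation in $L(\theta):=\mathbb{E}_{(s,s')}[\rho_{\tau_i}(r+\gamma\theta_i(s')-\theta)]$, giving $L'(\theta)=\mathbb{E}[g_i(s,s')]$ whenever the pointwise limit exists almost surely.

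The hard part --- really the only substantive point --- is justifying $\mathbb{Q}(u=0)=0$, since on that event the pointwise derivative is ambiguous and the chain rule identity above is only a subgradient relation. This holds whenever the conditional law of $s'$ given $s$ is absolutely continuous and $\theta_i$ is non-degenerate as a map, both of which are true in the present setup because the geometric Brownian dynamics in \eqref{eq:sde} produce an absolutely continuous transition kernel and the RBF parametrization $\theta_i(s)=w_i^{\top}\phi(s)$ is smooth. I would record this as a mild regularity assumption on the transition kernel and on the current parameters, under which the exceptional set at $u=0$ has measure zero and the interchange is valid in the strong sense. Without it the identity still holds in the subgradient sense, a caveat I would flag for completeness and which is standard in the convex-analysis treatment of quantile regression.
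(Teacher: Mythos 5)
Your proposal follows the same two-step skeleton as the paper's proof --- differentiate the pinball loss pointwise in $\theta_i(s)$, then interchange expectation and differentiation --- but you actually justify both steps where the paper merely asserts them. The paper's proof writes the pointwise derivative and then states, without argument, that $\mathbb{E}_{(s,s')}$ and $\partial/\partial\theta_i(s)$ commute; that assertion is the entire content of the claim, so as written the paper gives essentially no proof at all. You supply the missing pieces: the piecewise-linear structure of $\rho_{\tau_i}$ with bounded subdifferential at the kink $u=0$; a dominated-convergence argument for the interchange, valid because the difference quotients are uniformly bounded by the Lipschitz constant $\max(\tau_i,1-\tau_i)\le 1$; and the regularity condition $\mathbb{Q}(u=0)=0$, which you correctly flag as the only substantive obstruction and justify via absolute continuity of the GBM transition kernel together with non-degeneracy of the RBF map $s\mapsto w_i^\top\phi(s)$. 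You are also more honest about the non-differentiability: the paper writes a derivative symbol as though $\rho_{\tau_i}$ were smooth everywhere, whereas you note explicitly that without the null-set condition the claimed identity holds only as a subgradient relation. Same route, but your version is a proof and the paper's is a sketch.
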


\begin{proof}
We differentiate the quantile loss:
$$
\frac{\partial}{\partial \theta_i(s)} \rho_{\tau_i}(r + \gamma \theta_i(s') - \theta_i(s)) = -(\tau_i - \mathbb{I}_{\{r + \gamma \theta_i(s') - \theta_i(s) < 0\}})$$
Then take the expectation over transitions $(s, s') \sim D$:
$$\mathbb{E}_{(s, s')} \left[ \frac{\partial}{\partial \theta_i(s)} \rho_{\tau_i}(r + \gamma \theta_i(s') - \theta_i(s)) \right] = \frac{\partial}{\partial \theta_i(s)} \mathbb{E}_{(s, s')} \left[ \rho_{\tau_i}(r + \gamma \theta_i(s') - \theta_i(s)) \right],$$
proving the unbiasedness of the stochastic gradient estimator $g_i$.
\end{proof}

This result ensures that the updates used in learning the quantile representations of \( Z_t \) are theoretically sound.
In our implementation, these gradients are used to recursively estimate the distribution of option payoffs via stochastic gradient descent, under the recursive structure induced by the distributional Bellman operator.

\begin{remark}
	In our DistRL framework for Asian options, we minimize the quantile loss over a continuous state space using RBF approximators. 
	The stochastic gradients used to update the quantile parameters are valid and unbiased, as guaranteed by Proposition~\ref{prop:quantile_unbiased}, even under function approximation.
\end{remark}

\begin{remark}[Quantile Consistency]
	Under standard assumptions (e.g., correct sampling of transitions or sufficient data), minimizing the quantile regression loss ensures that each parameter \(\theta_i(s)\) converges to the \(\tau_i\)-quantile of the Bellman target distribution:
	\[
	\theta_i(s) \to \text{Quantile}_{\tau_i} \left( R(s) + \gamma Z(s') \right)
	\]
	This alignment guarantees that the learned distribution \(Z(s)\) retains a consistent probabilistic interpretation across updates. In financial applications, this is particularly important for quantifying tail risk, pricing, and scenario analysis.
\end{remark}
Remark above guarantees that the learned quantile functions converge to fixed-points of the Bellman operator applied to the cumulative average $A_t$. 
In our case, the state includes $(S_t, A_t, t)$ which ensures the necessary Markov structure for Bellman recursion.
Following \cite{dabney2017distributional}, we approximate the value distribution $ Z(s) $ by its inverse cumulative distribution function evaluated at fixed quantile levels:
\[
\tau_i := \frac{i - 0.5}{N}, \quad \hat{Z}(s) = \frac{1}{N}\sum_{i=1}^N \delta_{\theta_i(s)}
\]
where $ \theta_i(s) \in \mathbb{R} $represents the estimate of the $ \tau_i $-quantile of $ Z(s) $, and $ \delta_{\theta_i(s)} $ denotes a Dirac measure centered at $ \theta_i(s) $. 
This construction provides a discrete distributional representation of $ Z(s)$ with empirical support.
The parameters $ \theta_i(s) $ are learned by minimizing the quantile regression loss:
\[
\mathcal{L}_{\mathrm{QR}} = \sum_{i=1}^N \mathbb{E}_{(s, s')} \left[ \rho_{\tau_i} \left( r(s) + \gamma \theta_i(s') - \theta_i(s) \right) \right]
\]
where the quantile loss is given by:
\[
\rho_{\tau}(u) = u \cdot (\tau - \mathbb{I}_{\{u < 0\}}.)
\]
We interpret $ \hat{Z}(s)$ as a quantized approximation of the true return distribution \( Z(s) \), supported on $ \{ \theta_i(s) \}_{i=1}^N$, with each $ \theta_i(s) \approx F_Z^{-1}(\tau_i)$.
This Dirac-based representation is especially amenable to recursive learning and tractable propagation under distributional Bellman updates.
The discrete quantile-based representation put forward enables stable stochastic updates and avoids parametric assumptions about the shape of the value distribution.
Moreover, it aligns naturally with the financial interpretation of quantiles as Value-at-Risk (VaR) levels, which is particularly useful for path-dependent option pricing.
%
%
%
%
%
We interpret $ \hat{Z}(s) $ as an empirical approximation to the true return distribution $ Z(s) $, supported on $ \{\theta_i(s)\}_{i=1}^N $, where each $ \theta_i(s) \approx F_Z^{-1}(\tau_i) $.
This construction introduces Dirac deltas for the first time in the approximation layer.

\begin{remark}[Dirac Approximation Layer]
	The use of Dirac measures $\delta_{\theta_i(s)}$ constitutes an empirical approximation to the cumulative distribution function (CDF) of $Z(s)$ via its inverse. 
	This is the only stage where measure-valued representations are discretized, enabling (analytically) tractable gradient updates and convergence analysis.
\end{remark}


Under standard regularity assumptions on the state space, function class, and sampling procedure, the following result guarantees convergence of the quantile parameters $\theta_i(s) $ to the true quantiles of the return distribution such as

\begin{proposition}[Quantile SGD Converges to Population Quantiles; after\footnote{Adapted and generalized to continuous state space and financial payoff distributions} \cite{dabney2017distributional,dabney2018implicit}]\label{prop_suff_data}
	Let $(\Omega,\mathcal F,Q)$ be a probability space and let $Z(s)$ denote the (discounted) return distribution from a Markov state $s \in \mathcal S \subset \mathbb R^d$ (the state $s$ aggregates any path-dependent features).
	Fix a quantile level $\tau_i \in (0,1)$.
	Assume that for the fixed state $s$:
	
	\begin{enumerate}
		\item $\mathbb E^Q[|Z(s)|^2] < \infty$.
		\item The CDF $F_s$ of $Z(s)$ is continuous at its $\tau_i$-quantile $q_i(s)$ (so the minimizer is unique).
		\item We observe an i.i.d.\ sample $\{Z_k(s)\}_{k\ge 0}$ from $Z(s)$.
		\item Step sizes $\{\alpha_k\}$ satisfy $\sum_k \alpha_k = \infty$, $\sum_k \alpha_k^2 < \infty$.
	\end{enumerate}
	
	Consider the stochastic (sub)gradient iteration
	\[
	\theta_i^{(k+1)}(s)
	= \theta_i^{(k)}(s)
	+ \alpha_k \bigl(\tau_i - \mathbf 1_{\{Z_k(s) < \theta_i^{(k)}(s)\}}\bigr),
	\]
	which is equivalent to stochastic gradient descent on the quantile (pinball) loss
	$\rho_{\tau_i}(u) = u(\tau_i - \mathbf 1_{\{u<0\}})$ with $u = Z_k(s) - \theta_i^{(k)}(s)$.
	
	Then $\theta_i^{(k)}(s) \to q_i(s)$ almost surely (hence in probability), where
	\[
	q_i(s) = \inf\{x : F_s(x) \ge \tau_i\}
	= \arg\min_\theta \mathbb E^Q[\rho_{\tau_i}(Z(s) - \theta)].
	\]
\end{proposition}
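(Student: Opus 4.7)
The iteration is a one-dimensional Robbins-Monro stochastic approximation, so the plan is to cast it in that standard form, identify its mean field, and then conclude convergence via an almost-supermartingale (Robbins-Siegmund) argument applied to a quadratic Lyapunov function.

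First I would set $g_k := \tau_i - \mathbf 1_{\{Z_k(s) < \theta_i^{(k)}(s)\}}$ and introduce the mean field
\[
h(\theta) := \mathbb E^{Q}\bigl[\tau_i - \mathbf 1_{\{Z(s) < \theta\}}\bigr] = \tau_i - F_s(\theta^{-}).
\]
Writing $\mathcal G_k := \sigma(\theta_i^{(0)}(s),Z_0(s),\ldots,Z_{k-1}(s))$ and decomposing $g_k = h(\theta_i^{(k)}(s)) + M_{k+1}$ with $M_{k+1} := g_k - \mathbb E[g_k\mid\mathcal G_k]$, the update becomes
\[
\theta_i^{(k+1)}(s) = \theta_i^{(k)}(s) + \alpha_k\, h(\theta_i^{(k)}(s)) + \alpha_k M_{k+1},
\]
which is a textbook Robbins-Monro scheme driven by the martingale-difference noise $\{M_{k+1}\}$.

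Next I would verify the structural hypotheses. Monotonicity of the CDF makes $h$ non-increasing; continuity of $F_s$ at $q_i(s)$ yields $F_s(q_i(s)^{-}) = \tau_i$, hence $h(q_i(s)) = 0$; and the definition of $q_i(s)$ as the left endpoint of $\{F_s\ge\tau_i\}$ gives the crucial sign condition
\[
(\theta - q_i(s))\, h(\theta) \le 0 \qquad \forall\,\theta\in\mathbb R.
\]
The noise is uniformly bounded because $|g_k|\le 1$, so $|M_{k+1}|\le 2$ and $\mathbb E[M_{k+1}^2\mid\mathcal G_k]\le 4$.

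For the convergence step I would pick the Lyapunov function $V_k := (\theta_i^{(k)}(s) - q_i(s))^2$ and expand one step of the recursion to obtain
\[
\mathbb E[V_{k+1}\mid\mathcal G_k] \le V_k + 2\alpha_k (\theta_i^{(k)}(s) - q_i(s))\,h(\theta_i^{(k)}(s)) + 4\alpha_k^{2}.
\]
The middle term is non-positive by the sign condition and $\sum_k \alpha_k^{2}<\infty$, so Robbins-Siegmund gives a.s.\ convergence of $V_k$ to a finite limit together with the summability
\[
\sum_{k} \alpha_k \,\bigl|(\theta_i^{(k)}(s) - q_i(s))\,h(\theta_i^{(k)}(s))\bigr| < \infty \quad \text{a.s.}
\]
Combined with $\sum_k \alpha_k = \infty$ and continuity of $F_s$ at $q_i(s)$, any subsequence bounded away from $q_i(s)$ would force the summability to fail, so the a.s.\ limit of $V_k$ must be zero.

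The main obstacle is the non-smoothness of the pinball loss: $h$ is only monotone and may have a jump (when $F_s$ jumps across $\tau_i$) or a plateau (when $F_s$ is flat at level $\tau_i$), so classical strongly-convex SGD bounds do not apply. The continuity assumption at $q_i(s)$ is precisely what excludes the jump case and fixes $q_i(s)$ as the limit point; absent that assumption one would at best recover convergence to the convex set of $\tau_i$-quantiles, which would weaken the recursive pricing interpretation used in the rest of the manuscript.
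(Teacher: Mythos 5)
Your proof is correct, but it takes a genuinely different route from the paper's. The paper casts the update as a Robbins--Monro scheme, identifies the mean field $h(\theta)=\tau_i-F_s(\theta)$ and its unique root $q_i(s)$, and then invokes the ODE method: $q_i(s)$ is a globally asymptotically stable equilibrium of $\dot\theta=h(\theta)$, the noise is a bounded martingale difference, and ``standard stochastic approximation results'' give a.s.\ convergence. You instead work directly with the quadratic Lyapunov function $V_k=(\theta_i^{(k)}(s)-q_i(s))^2$, verify the one-sided drift condition $(\theta-q_i(s))\,h(\theta)\le 0$, and apply Robbins--Siegmund. The Lyapunov/Robbins--Siegmund route is more elementary and self-contained (no appeal to an ODE limit theorem), whereas the ODE route generalizes more readily to projected or multivariate iterates. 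You are also more careful than the paper in writing $h(\theta)=\tau_i - F_s(\theta^-)$, matching the strict inequality in the indicator; the paper's $F_s(\theta_k)$ is correct only where $F_s$ is continuous.

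Two small points worth tightening. First, your final step --- ``any subsequence bounded away from $q_i(s)$ would force the summability to fail'' --- is slightly compressed: a bad subsequence alone does not contradict $\sum_k \alpha_k\,|(\theta_k-q)h(\theta_k)|<\infty$. You should note that $V_k\to V_\infty$ together with $|\theta_{k+1}-\theta_k|\le\alpha_k\to 0$ forces $\theta_k$ to converge (its limit set is connected and contained in $\{q\pm\sqrt{V_\infty}\}$), and only then does uniqueness of the minimizer make $|h(\theta_k)|$ bounded below along the whole tail, giving the contradiction with $\sum_k\alpha_k=\infty$. Second, both you and the paper gloss the plateau case: continuity of $F_s$ at $q_i(s)$ alone does not rule out $F_s\equiv\tau_i$ on an interval $[q_i(s),q_i(s)+c]$, in which case the minimizer set is that interval, not a point. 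The parenthetical ``(so the minimizer is unique)'' in the proposition is therefore an additional hypothesis, not a consequence of item (2); your closing remark about convergence to the convex set of $\tau_i$-quantiles is the right reading of what happens without it, and it would strengthen the write-up to say so explicitly.
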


\begin{proof}
	Write $U_k = Z_k(s) - \theta_i^{(k)}(s)$ and $g(U_k) := \tau_i - \mathbf 1_{\{U_k<0\}}$.
	Then the update is $\theta_{k+1} = \theta_k + \alpha_k g(U_k)$.
	Conditional on $\theta_k$, $\mathbb E[g(U_k) \mid \theta_k] = \tau_i - F_s(\theta_k)$.
	Define the mean field $h(\theta) := \tau_i - F_s(\theta)$.
	Assumption~(2) implies $h$ is continuous and strictly positive below $q_i(s)$, negative above, and zero at $q_i(s)$.
	Thus $q_i(s)$ is the unique root of $h$ and is a globally asymptotically stable equilibrium of the associated ODE $\dot \theta = h(\theta)$.
	The noise sequence $M_{k+1} := g(U_k) - h(\theta_k)$ is a martingale difference with bounded second moment by Assumption~(1).
	Standard stochastic approximation results  yield $\theta_k \to q_i(s)$ a.s.
	Finally, note $\mathbb E[\rho_{\tau_i}(Z(s)-\theta)]$ is minimized exactly at $q_i(s)$ by properties of the pinball loss, giving the last equality.
\end{proof}

This result ensures that, given sufficient training data and appropriate learning rates, the quantile estimates \( \theta_i(s) \) converge to the true inverse CDF values at levels \( \tau_i \), thereby allowing an accurate reconstruction of the value distribution \( Z(s) \). 
Although we use a finite-dimensional RBF basis rather than a neural network, the convergence result still applies under the same boundedness and step-size conditions.
This validates our training procedure empirically and theoretically.

	
	

\subsection{From Distributional Bellman Equations to Quantile Optimization}

While the distributional Bellman operator provides a recursive definition of the value distribution, direct implementation is infeasible. 
To navigate through the recursion, we minimize the difference between the predicted and actual quantile values via quantile loss as in \cite{dabney2017distributional}:
\begin{equation}\label{eq:l_qr}
\mathcal{L}_{\mathrm{QR}} = \sum_{i=1}^N \mathbb{E} \left[ \rho_{\tau_i}(r + \gamma \theta_i(s') - \theta_i(s)) \right],
\end{equation}
with \eqref{eq:l_qr}, the core of the learning problem is formed. 
In our implementation, we approximate each quantile function $ \theta_i(s) $ using a fixed radial basis function (RBF) expansion $\phi(s) \in \mathbb{R}^d $, such that
\begin{equation}\label{eq:rbf1}
\theta_i(s) = w_i^\top \phi(s).
\end{equation}
The parametrization in \eqref{eq:rbf1} transforms the learning of quantile functions into the learning of linear weights $ w_i \in \mathbb{R}^d $ over non-linear features. 
We now provide a proposition for the closed-form stochastic gradient update of quantile regression when using RBF function approximators.
While inspired by the quantile loss framework of \cite{dabney2017distributional}, our derivation is specific to our parametric setting and gives the explicit form of the stochastic gradient update used in this setting, that is

\begin{proposition}[Stochastic (Sub)Gradient for RBF-Based Quantile TD Loss]
	Let $\phi(s)\in\mathbb R^d$ be a fixed feature vector and approximate the $i$th quantile by
	$\theta_i(s)=w_i^\top\phi(s)$.
	Given a transition $(s,r,s')$ and discount $\gamma\in[0,1)$ define the TD residual
	\[
	\Delta_i = r + \gamma\,\theta_i(s') - \theta_i(s)
	= r + \gamma\,w_i^\top\phi(s') - w_i^\top\phi(s).
	\]
	Let the quantile (pinball) loss be $\rho_{\tau_i}(u)=u(\tau_i-\mathbf 1_{\{u<0\}})$, with subgradient
	$c_i := \tau_i - \mathbf 1_{\{u<0\}}$ (any selection at $u=0$).
	Then the \emph{full} parameter (sub)gradient is
	\[
	\nabla_{w_i}\rho_{\tau_i}(\Delta_i)
	= (\tau_i - \mathbf 1_{\{\Delta_i<0\}})\,(\gamma \phi(s') - \phi(s)).
	\]
	If we adopt the standard temporal-difference \emph{semi-gradient} that stops gradients
	through the bootstrapped target $r+\gamma \theta_i(s')$, we obtain
	\[
	\nabla_{w_i}^{\mathrm{semi}} \rho_{\tau_i}(\Delta_i)
	= -(\tau_i - \mathbf 1_{\{\Delta_i<0\}})\,\phi(s).
	\]
	An SGD update with learning rate $\eta>0$ is then
	\[
	w_i \leftarrow w_i + \eta\,(\tau_i - \mathbf 1_{\{\Delta_i<0\}})\,\phi(s)
	\]
	(semi-gradient case).
\end{proposition}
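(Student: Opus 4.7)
The plan is to derive both gradient expressions by a single application of the chain rule for convex piecewise--linear functions composed with affine maps. First I would record the subdifferential of the pinball loss in its scalar argument: since $\rho_{\tau_i}$ is convex and piecewise linear with slopes $\tau_i - 1$ on $u<0$ and $\tau_i$ on $u>0$, we have $\partial_u \rho_{\tau_i}(u) = \{\tau_i - \mathbf 1_{\{u<0\}}\}$ for $u \neq 0$ and $\partial_u \rho_{\tau_i}(0) = [\tau_i-1,\tau_i]$. This identifies $c_i = \tau_i - \mathbf 1_{\{\Delta_i<0\}}$ as the unique (sub)gradient whenever $\Delta_i \neq 0$, and any element of the interval otherwise.

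Next I would use that $w_i \mapsto \Delta_i$ is affine, $\Delta_i(w_i) = r + w_i^\top (\gamma \phi(s') - \phi(s))$, so $\nabla_{w_i} \Delta_i = \gamma \phi(s') - \phi(s)$ in closed form. By the standard composition rule for a convex function with an affine map (e.g.\ Rockafellar, Thm.~23.9), the subdifferential of $w_i \mapsto \rho_{\tau_i}(\Delta_i(w_i))$ is the pushforward $(\gamma \phi(s') - \phi(s)) \cdot \partial_u \rho_{\tau_i}(\Delta_i)$. Evaluating at the single-valued branch yields the full gradient
\[
\nabla_{w_i} \rho_{\tau_i}(\Delta_i) = \bigl(\tau_i - \mathbf 1_{\{\Delta_i<0\}}\bigr)\bigl(\gamma \phi(s') - \phi(s)\bigr),
\]
which is the first claim.

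For the semi-gradient version, the plan is simply to re-parametrize: view the bootstrap target $y := r + \gamma\, w_i^\top \phi(s')$ as an externally supplied constant (the stop-gradient convention), so $\Delta_i = y - w_i^\top \phi(s)$ and $\nabla_{w_i}^{\mathrm{semi}} \Delta_i = -\phi(s)$. Plugging into the same chain rule produces the stated formula, and the SGD step follows from $w_i \leftarrow w_i - \eta \nabla_{w_i}^{\mathrm{semi}} \rho_{\tau_i}(\Delta_i)$, where the two minus signs cancel.

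The only subtle point is the non-differentiability of $\rho_{\tau_i}$ at $u = 0$, which propagates to non-differentiability of the composition on the hyperplane $\{w_i : \Delta_i = 0\}$. The plan to handle this is to choose \emph{any} measurable selection from the subdifferential at that event (say the value $\tau_i - 1$ corresponding to the convention $\mathbf 1_{\{0<0\}}=0$ used in the statement). Under any transition law that renders $\Delta_i$ absolutely continuous, the event $\{\Delta_i = 0\}$ has probability zero, so the selection is immaterial for the unbiasedness argument of Proposition~\ref{prop:quantile_unbiased} and for the stochastic approximation convergence invoked in Proposition~\ref{prop_suff_data}. I expect this measure-zero caveat to be the only real obstacle; the rest is bookkeeping with the chain rule.
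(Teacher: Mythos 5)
Your proposal is correct and follows essentially the same route as the paper's proof: apply the chain rule to the pinball loss composed with the affine map $w_i \mapsto \Delta_i$, giving $\nabla_{w_i}\Delta_i = \gamma\phi(s')-\phi(s)$ for the full gradient and $-\phi(s)$ under the stop-gradient convention. The only difference is presentational: you spell out the subdifferential calculus (citing Rockafellar's composition theorem) and note explicitly that the kink at $\Delta_i=0$ is a measure-zero event, whereas the paper simply says ``choose any subgradient at $\Delta_i=0$'' and leaves the rest implicit.
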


\begin{proof}
	Write $c_i := \tau_i - \mathbf 1_{\{\Delta_i<0\}}$ (choose any subgradient at $\Delta_i=0$).
	By the chain rule,
	$\nabla_{w_i}\rho_{\tau_i}(\Delta_i) = c_i \nabla_{w_i}\Delta_i$.
	Since $\theta_i(s)=w_i^\top\phi(s)$, $\theta_i(s')=w_i^\top\phi(s')$,
	we have $\nabla_{w_i}\Delta_i = \gamma\phi(s') - \phi(s)$, giving the full gradient.
	If gradients are not propagated through the target term
	$r+\gamma\theta_i(s')$, then
	$\nabla_{w_i}\Delta_i = -\phi(s)$, yielding the stated semi-gradient.
\end{proof}

In practice, we apply the \emph{semi-gradient} variant of the quantile loss, where the bootstrapped target $\theta_i(s')$ is treated as a constant during differentiation. This avoids propagating gradients through the next-state estimate and improves numerical stability, as is standard in temporal-difference learning, \cite{sutton2018reinforcement}. 
Formally, although the quantile loss depends on both $\theta_i(s)$ and $\theta_i(s')$, we compute:
\[
\nabla_{w_i} \rho_{\tau_i}(r + \gamma \theta_i(s') - \theta_i(s)) \approx -\left(\tau_i - \mathbb{I}_{\{r + \gamma \theta_i(s') - \theta_i(s) < 0\}}\right) \cdot \phi(s)
\]
This form highlights how the gradient combines the TD residual \( \Delta_i \), the RBF features \( \phi(s) \), and the asymmetry of the quantile loss. Unlike traditional TD learning, the update direction depends on the quantile level \( \tau_i \), thereby adjusting each quantile independently.
In practice, these updates are performed in parallel for all \( i = 1, \dots, N \), enabling a full approximation of the value distribution \( \hat{Z}(s) \) at every visited state.

The distributional reinforcement learning formulation developed above is particularly well-suited for financial applications where understanding the full distribution of outcomes is more informative than merely computing expected values.
In the context of path-dependent options, such as Asian  options, the shape of the terminal value distribution encodes valuable information.
Furthermore, the ability to approximate the value distribution using state-dependent quantiles allows the model to dynamically react to evolving market states, including sudden jumps or volatility shifts.
For instance, the quantile estimates $ \theta_i(s) $ may show sharp responses in regimes characterized by high uncertainty or after discontinuous moves in the underlying process. 
These adaptive updates could rigorously  offer a mechanism for capturing distributional shifts without explicitly specifying various components in the model.

The payoff functional $f(S_{0:T}) $, which ultimately determines the terminal distribution $Z_T $, often exhibits rich statistical structure.
For instance, in Asian options, it involves a time-averaging operator introducing    complex distributional features.
While fully acknowledging the challenges associated with our reinterpretation; we aim to contribute given our expectation of more and more \emph{model-free} approaches surfacing in the coming years.
We now move on to numerical illustration to provide practical implications of the model presented in detail.

\section{Numerical Illustrations}

Training alone on no-prior knowledge of the distribution shape carries its own risk.
To provide a stable and neutral starting point, we initialize  weight vector of each quantile uniformly so that the resulting predicted quantiles  at the initial state approximates the mean payoff from a Monte Carlo simulation. 
This creates a degenerate distribution (i.e., all quantiles are initially identical), centered at the mean payoff.
Although we remark uninformative nature of the distributional shape, this initialization offers a stable baseline  to learn the true distributional characteristics through training.
Another major reason is the possibility of early gradient explosion.
Such incidents could occur if initial weights are far away from the target, quantile loss gradients might be larger than expected.
This aspect might introduce large updates hence destabilizing the learning process.
As training progresses, the model is expected to learn to spread the quantiles to match the true distributional shape.
In financial context; we aim to prevent the model from initially over- or under-pricing severely while learning distributional aspects.
As in a setting like ours, random initialization might produce extreme or unreasonable values so that we strictly caution against it.

During each epoch, we have samples a batch of stochastic trajectories (episodes) under the risk-neutral asset price dynamics.
We denote the number of paths in an epoch by $N_{\text{path}}$, and each path in an epoch starts at a different seed of randomness\footnote{We prefer to work with np.random.seed(.)}.
Each episode proceeds from the initial state through discrete time steps until maturity.
States along the path consist of the current spot price, running average, and time index.
The terminal payoff is computed and used as a training target to update the quantile value estimate at the final state.
Subsequently, recursive backward updates are applied to intermediate states via the distributional Bellman equation.
For each episode, we compute the TD-targets backward from terminal state to the initial state using this recursive formulation
Quantile weights are then updated via the quantile regression gradient with learning rate 
$\eta$, using clipped gradient $l_2$-norm to ensure stability aligned with \cite{dabney2017distributional} though we differ by not using Huber loss.

To represent the quantile estimates $ \theta_i(s) $, we use a radial basis function (RBF) feature map defined over the normalized state space. 
Each state $ s = (S_t, A_t, t) $ is normalized to the unit cube $ [0, 1]^3 $ by dividing each component by fixed scaling constants.
The feature vector $ \phi(s) \in \mathbb{R}^{d} $ includes a bias term and $ n$ Gaussian RBFs centered at locations $ \{c_j\}_{j=1}^n \subset [0,1]^3 $, with fixed bandwidth $ \sigma = 0.5 $.
Formally, the RBF features are given by:
\begin{equation}\label{eq:RBF}
\phi_j(s) = \exp\left( -\frac{\|s - c_j\|^2}{2\tilde{\sigma}^2} \right), \quad j = 1, \dots, n,
\end{equation}
thus $\phi(s) = \left[1, \phi_1(s), \dots, \phi_n(s)\right]^\top$.
We use $ n = 40 $ RBF centers, sampled uniformly at random from $ [0, 1]^3 $. 
This setup provides a smooth and localized function approximation basis suitable for modeling the nonlinear structure of the option payoff distribution.
The number of RBF centers and $\tilde{\sigma}=0.5$ are fixed throughout the study as we have no interest in further increasing the computational complexity.
Another major point we remark is the state vector in \eqref{eq:RBF}.
So as to keep the RBF input domain within $[0,1]$ and further match the domain over which RBF centers are randomly sampled; we normalize the state vector, $s$, by 200 given our initial price so it never gets greater than one, except the number of steps in the state vector.
Surely, more sophisticated approaches could be offered, yet for our scope we remark the sufficiency of such simplicity.
However, normalizing by 200, in our case, is simple, stable and could be easily justified given the price range in our simulations. 
During training, we employ a Monte Carlo reference set of payoffs based on 3000 simulated  paths using \eqref{eq:sde} to monitor convergence diagnostics per epoch. 
This considerably smaller but consistent sample provides a stable validation reference across epochs without excessive computational burden. 
This reference set is not used for training but solely to evaluate convergence of the model.
For final evaluation, we independently generate a larger  sample of 100,000 paths to serve as a statistically robust benchmark. 
This allows us to report final accuracy metrics, including mean payoff, distributional distances, and quantile prediction error, against a high-precision estimate of the payoff distribution.
We note that we employ 100 paths for 100 epochs, and only the 100-path (of the first epoch) is using the same seed of randomness with the larger benchmark set.
The reason we have an independent set as a benchmark is to see the performance on out-of-sample rather than in-sample; so as to see if our suggestion could handle generalization to some degree.
We restate that once trained; the method should be able to used later with a degree of accuracy and be able to provide consistent answers.
This marks the cornerstone in an any trustworthy approach of option pricing.
Each training epoch, we generate fresh paths (100 per epoch) with random seeds (no reset between epochs), so the trajectories are new every time, and as previously stated we have 100 epochs.
Although we observe drastic improvement in many cases\footnote{since we carry forward the learned parameters to the next epoch, rather than resetting	and that is what enables the distributional recursion to converge over time} with intensely higher number of such training over epochs; this perspective might not be possible at all given the availability of data in financial setting (i.e. the number of derivatives contracts and associated possible payoffs).
Although Proposition \ref{prop_suff_data} states the existence of enough data to train on; given the implausibility in our field, we report our finding on such training set as opposed to many machine learning data requirements.

Our report in this part rather involves two parts: experiments in a reasonably realistic financial settings and reasonably unrealistic financial settings to present some aspects that would require a degree of caution.
While not presenting the underlying parameters explicitly for the second part for reasons of clarity and nature of strangely stretched boundaries of financial realities in such experiments; for this first part we keep almost everything constant except the initial price, $S_0$ and provide the parameter space of the experimentation.
We keep the interest rate, $r$, at 0.03 and the volatility, $\sigma$, at 0.2 considering that there exist 252 time steps (i.e. trading day) in a year; the annual volatility of 20 per cent is not far from realistic.
We also set the strike price, $K$, at 100 and keep it constant for all parts.
We keep the number of quantiles in our learning problem at 50, and do not deliberately present a sensitivity analysis in regard to the number of quantiles.
The major reason is the fact that there exist 252 trading days in a year, therefore we could accept it as a reasonable benchmark.
However, we remark that increasing the number of quantiles brings in higher computational burden and during our experiments we failed to observe significant improvement considering the increased time required for computation.

Another major point to consider is the outlier payoffs that could blow up gradient updates, especially in quantile learning, which is highly sensitive to distribution tails.
Clipping the payoff prevents these outliers from destabilizing the learning updates.
Capping as a practical approximation of focusing on reasonable and learnable range  rather than letting a tiny fraction of scenarios dominate training needs no deeper discussion.
Although we will relax this later on.

In Tables \ref{tab:1}, \ref{tab:2} and \ref{tab:3}; comparative results on Absolute Errors are given for learning rate, $\eta$, of 0.005 and Wasserstein Distance is provided only in Table \ref{tab:1}.
Although strike price is unchanged at 100, the initial price varies in each tabled from 5 to 20.
Besides, payoffs are clipped for reason outlined previously. 
In the second section of this manuscript, we remark that deciding along on Wasserstein Distance might be misleading given the complexity of such distributions; therefore we provide a numerical comparison only on Table \ref{tab:1}.
We note that Wasserstein Distance could be sufficient in indicating superior or inferior performance, yet we failed to generalize a benchmark level.
However, we note that absolute errors tend to be lower than one in cases the described distance tends to be lower than one.
We remark that, however, due to the presence of kurtosis and skewness; this becomes a challenging measure to decide upon.
We further note that $S_0 - K$ is definitely not the Asian call option's payoff defined, yet we find it simple and effective enough to create distributions to learn from.

\begin{table}[htbp]
	\centering
	\caption{For   $\eta=0.005$}
	\begin{tabular}{cccccc}
		\\$S_0-K$ & Max Payoff & MC Price & DistRL & Abs. Error & Wasserstein Distance \\
		\midrule
		\multirow{2}[1]{*}{5} & 10    & 5.2510 & 5.6479 & 0.3968 & 0.6094 \\
		& 20    & 7.6511 & 9.9488 & 2.2977 & 2.6017 \\
	\end{tabular}%
	\label{tab:1}%
\end{table}%
In Tables \ref{tab:2} and \ref{tab:3}, we again observe that a reasonable range of payoffs in a narrower field provides a reasonable possibility of learning.
Before we discuss the results, we state that we choose to price a one-year maturity Asian call options so as to keep the distribution wider and difficult learn. 
As long as the difference between the initial price and the strike price is closer (i.e. 5 to 10 per cent on average), our framework provides an acceptable approximation.
However, we stress again that a different set of data is used for final evaluation, hence in case of completely different our framework provides a stable approximation.
Another major point is the requirements of clipped payoffs. 
In all reported results, we observe the absolute errors tend to increase as the training data becomes sparse and too spread out.

\begin{table}[htbp]
	\centering
	\caption{For   $\eta=0.005$}
	\begin{tabular}{ccccc}
		\\$S_0-K$ & Max Payoff & MC Price & DistRL Price & Abs. Error \\
		\midrule
		\multirow{3}[1]{*}{10} & 10    & 6.758 & 5.7523 & 1.0363 \\
		& 20    & 10.5275 & 11.6273 & 1.0997 \\
		& 30    & 12.0681 & 15.8209 & 3.7527 \\
	\end{tabular}%
	\label{tab:2}%
\end{table}%

We note that in all cases, the learning rate is fixed at 0.005, and we remark this is an acceptable value given the updates occurring in the framework.
We argue that higher learning rates, on average, causes serious overestimation (up to 30 per cent on some experiments).
Therefore, in a setting such as ours; any rate between 0.001 to 0.005 is reasonable and is unlikely to blow up in final quantiles.
Yet we stress again this particular choice requires subject related knowledge, and caution must not be spared.

\begin{table}[htbp]
	\centering
	\caption{For   $\eta=0.005$}
	\begin{tabular}{ccccc}
		\\$S_0-K$ & Max Payoff & MC Price & DistRL & Abs. Error \\
		\midrule
		\multirow{3}[1]{*}{20} & 10    & 8.8432 & 6.0662 & 2.7770 \\
		& 30    & 19.3664 & 15.6854 & 3.6810 \\
		& 50    & 21.7369 & 26.2271 & 4.4902 \\
	\end{tabular}%
	\label{tab:3}%
\end{table}%

In out-of-the-money and deep out-of-the-money configurations (i.e. over 10 per cent difference of the initial price and the strike price), the payoff distribution is highly imbalanced, dominated by zero payoffs with very sparse positive values.
This poses interesting challenges for the learning algorithm, because the quantile updates have weak signals.
We suggest boosting path diversity via drift-shifted importance sampling, increasing the training sample size, and initializing quantile estimates to small positive values to avoid collapse. 
These strategies aim to stabilize learning and improve convergence in rare-event regimes.
Given the purpose and the scope of the manuscript, we leave this for a further study.

\begin{figure}[h]
	\begin{subfigure}{0.5\textwidth}
		\centering
		\includegraphics[ width=1\linewidth]{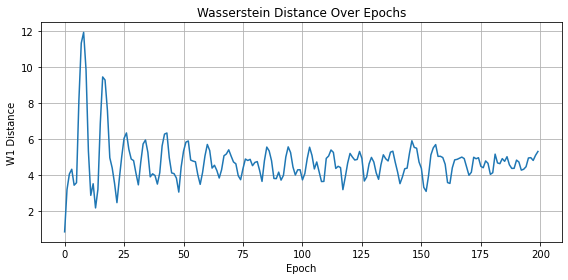}
		\caption{Wasserstein Distance over Epochs}
		\label{plot:W1}
	\end{subfigure}
	\begin{subfigure}{0.5\textwidth}
		\centering
		\includegraphics[width=1\linewidth]{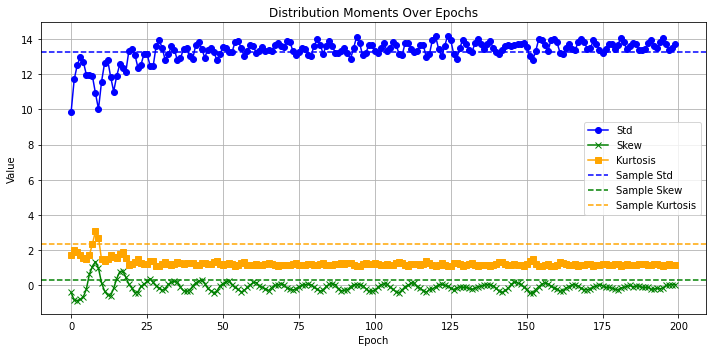}
		\caption{Distribution Moments over Epochs}
		\label{plot:moments}
	\end{subfigure}
	\caption{Counter example of increased epochs }
	\label{plot:200_epocs}
\end{figure}

We stated that more data might be needed for a better learning experience as also theoretically justified in Proposition \ref{prop_suff_data}, and we stand by that while acknowledging the possible difficulty in working with real data.
However, we provide a counter example in Figure \ref{plot:200_epocs} with the number of epochs (trials) doubled.
In Figures \ref{plot:W1} and \ref{plot:moments}, we observe the convergence even before 100 epochs.
In Figure \ref{plot:W1}, we see the learning algorithm learns considerably well.
In Figure \ref{plot:moments}; by sample values we refer to the mean, skewness and kurtosis values of the payoff distribution which are constant and plotted with dashed lines.
Similarly, doubling the number of trials brings in no further improvement in learning the distributional moments.
We note that skewness and mean of the payoff distributions are learned and generalized quite well on average.
Another reason we present Figure \ref{plot:moments} is to visually show that the inability to learn the kurtosis tends to reflect on the Wasserstein Distance, as it is a major component of the value distribution we aim to learn; we further note that Figure \ref{plot:200_epocs} is specifically chosen among unsuccessful trials to present a possible weakness and a cautionary suggestion.

\begin{figure}[h]
	\begin{subfigure}{0.5\textwidth}
		\centering
		\includegraphics[ width=0.8\linewidth]{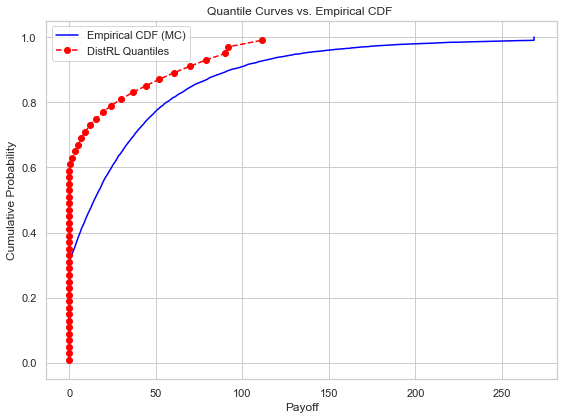}
		\caption{Only DistRL Clipped}
		\label{fig:d}
	\end{subfigure}
	\begin{subfigure}{0.5\textwidth}
		\centering
		\includegraphics[width=0.8\linewidth]{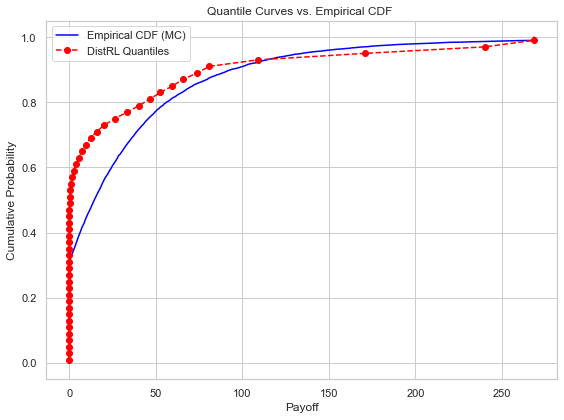}
		\caption{All Clipped}
		\label{2:d}
	\end{subfigure}
	\caption{Effects of Clipped Payoffs in a Larger Sample}
	\label{plot:clipped}
\end{figure}
Another point on clipping is the necessity of clipping on both training and test (evaluation) data.
An interesting example is given in Figure \ref{plot:clipped} in which an implausible problem of learning on a high spectrum of payoffs is designed with the initial price 120. 
In  both cased Monte Carlo generated payoffs are clipped at 200, but only in Figure \ref{2:d}; both are clipped at 200, while in Figure \ref{fig:d} the learning algorithm is designed to learn a considerably narrowed distribution, so we observe a horrifying amount of underestimation. 
We remark that while clipping payoffs to generate a better learning experience might be a good idea, an extreme caution should be applied.
Especially in cases of a diverse portfolios of derivatives; this issue is likely to amplify.

\begin{figure}[h]
\begin{subfigure}{0.5\textwidth}
	\centering
	\includegraphics[ width=0.8\linewidth]{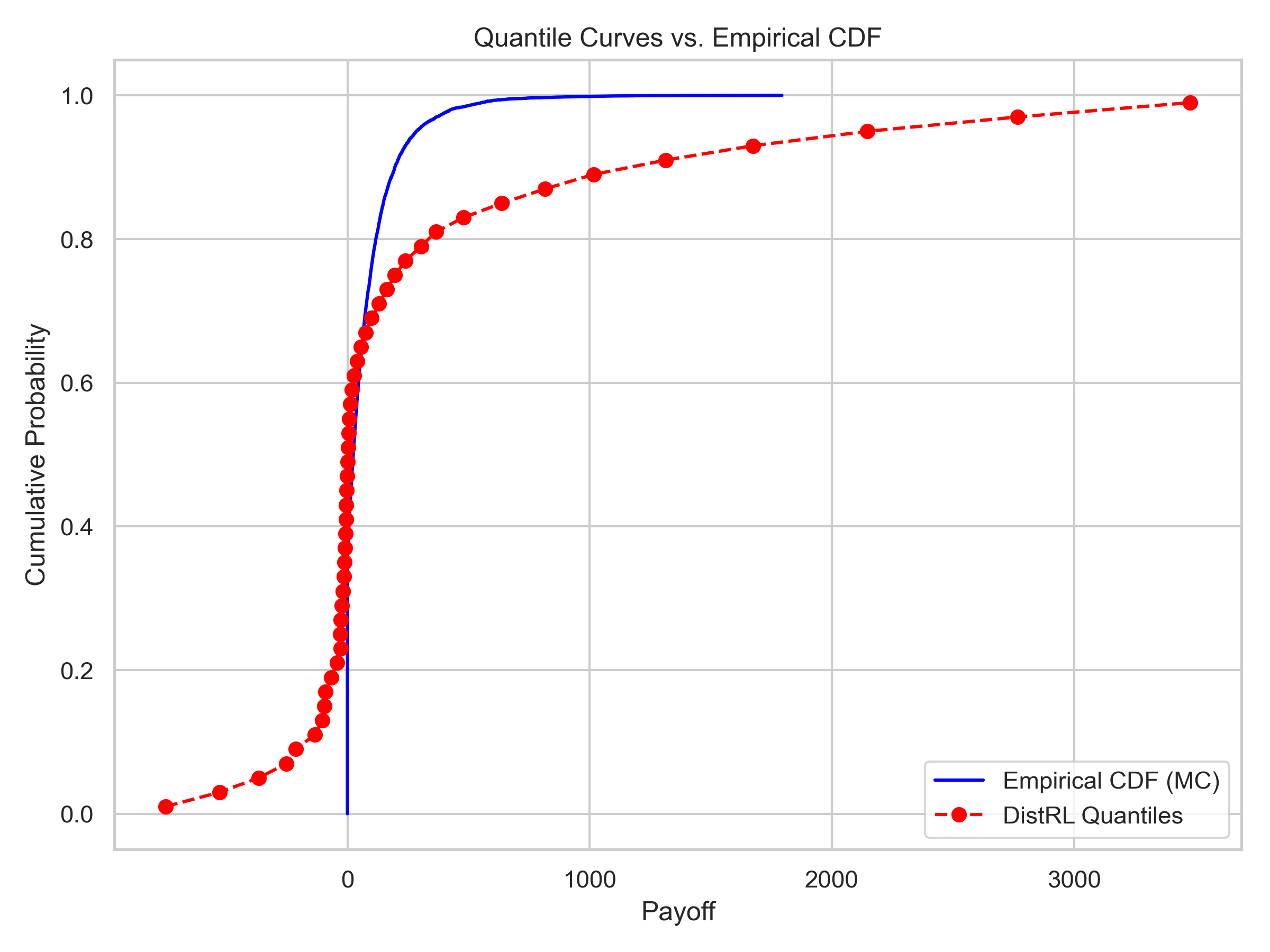}
	\caption{Example of negative prices}
	\label{plot:a}
\end{subfigure}
\begin{subfigure}{0.5\textwidth}
	\centering
	\includegraphics[width=0.8\linewidth]{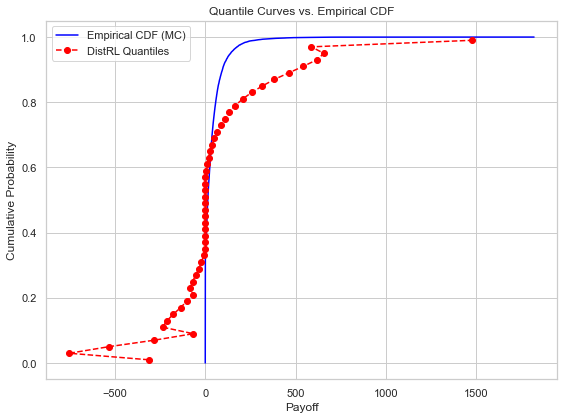}
	\caption{Example of negative prices}
	\label{plot:b}
\end{subfigure}
\caption{Effects of Non-clipped Gradients in a Larger Sample}
\label{plot:negative_prices}
\end{figure}

We finally present, perhaps the most critical necessity in our framework, the effects of non-clipped gradients.
In Figure \ref{plot:negative_prices}, we present specifically selected results out of experiments in which we enforced the algorithm to learn on payoffs worth from 0 to approximately 2000 and to over 3000 generated again from an initial price of 120 and strike price of 100.
While we are aware the absurdity of the designed problem, we remark that stretching such boundaries might constitute an interesting example.
In these cases, gradients  are not clipped hence leading to explosions in updates.
This is also amplified by the sensitivity in the tail of quantile regression.   
Although on a realistic set of data to train, such negative prices generated by the framework is highly unlikely (at least not observed during such cases) and the underlying method in our framework could lead to such extremes if the learning environment is not carefully designed.

We provide several other important points to consider such as sparse learning signals or strictly sharp gradients near maturity, which might be caused by sudden spikes in the underlying price process.
To address this concern; reward shaping could have been an important approach.
In a case of ours, reward shaping is not of help as we are constrained by the financial structure of path dependence.
Another point is the quantile spacing induced by the modeler.
In our manuscript,we find 50 quantiles sufficient, yet with an unrealistic selection one runs the likelihood of learning several quantiles accurately but missing the overall shape of the value distribution.
Though we state the RBF is so far capable of adapting to rapid or sudden distributional shifts in value targets, this might not always be the case as the payoff distribution could change drastically over time.
Although our design of state appeals reasonable in our case, any mismatch might lead to unstable training or even garbage quantiles causing absurd updates in the core of the framework.
Such issues should always be under consideration in applying the framework we presented in this manuscript, and special attention must be given to the underlying distribution to learn from.
Although it is far from being perfect, on the cases that our framework is flexible (any distribution and any payoff), scalable, and interpretable  and we argue that it could be a powerful tool.
Besides, our framework is online-compatible as the learning continuous as new data (or episodes) arrives.




\section{Conclusion}	

Model-free approaches, generally, are far from perfect, yet they have undeniably appealing properties, \cite{moody2001learning, nevmyvaka2006reinforcement}.
A major one of these is the unnecessity of a strict distributional assumption.
A special case to our framework is the full distributional access which could lead to easily implemented risk controls.
Even though we primarily focused on the pricing part of options; we strongly believe that Distributional Reinforcement Learning could be more and more appealing to risk management principles as access to quantiles and tail probabilities are possible without strong assumptions, and once trained, a thorough generalization could be possible effectively.

Our preference in path-dependent options are strictly related to the promises of DistRL.
Any path-dependent option could be priced via the framework we present in our study, even American options if extend it with policy optimization. 
We leave this important future perspective as a further study.
 
The framework, in a way, learns via episodic simulations (i.e. epochs).
This could become especially powerful in data-limited setting, as each epoch could be randomized to some extent to keep training the algorithm.
Finally, it could be modular with function approximation  meaning that the RBF approximation could be replaced with neural networks for especially high-dimensional approximation.
A natural alternative is to directly fit a parametric (or non-parametric) distribution to Monte Carlo payoff samples at each state. 
While feasible in principle, this approach has several critical limitations compared to Distributional Reinforcement Learning.
Fitting a distribution at each time-step or state ignores temporal structure and directly fitting distributions at every point $s_t \in \mathbb{R}^d$ requires a regression of distributions over a continuous state space but DistRL learns quantiles as continuous functions of state.
Hence we achieve by reformulating option pricing as a recursive estimation of $Z_t$, we recover a non-parametric, model-free alternative to traditional  methods.

We, throughout the study followed the principle of designing no further than needed.
As the learning problem is a data or scenario based problem, we make sure no extra steps are taken to present results more positively than they are.
This is the reason we do not employ possible techniques to increase the capability of the learning algorithm but rather allow it go on its own and on its own alone.
We finally remark the necessity of testing our framework on stochastic volatility and interest rate scenarios along with real life option payoffs.

\backmatter


\section*{Compliance with Ethical Standards}


\bmhead*{Funding} No funding received in the making of this manuscript.

\bmhead*{Competing Interests}
 There are no financial or non-financial interests   directly or indirectly related to the work submitted for publication.





\bigskip

\begin{appendices}






\end{appendices}


\bibliography{dist_rl_path_dep}
\end{document}